\numberwithin{figure}{section}
\newtheorem{lemma}{Lemma}[section]
\newtheorem{definition}[lemma]{Definition}
\newtheorem{example}[lemma]{Example}
\newtheorem{proposition}[lemma]{Proposition}
\newtheorem{remark}[lemma]{Remark}
\newtheorem{theorem}[lemma]{Theorem}
\numberwithin{equation}{section}
\begin{document}
\title[Entanglement entropy and hyperuniformity]{Entanglement entropy and
hyperuniformity of Ginibre and Weyl-Heisenberg ensembles }
\author{Lu\'{\i}s Daniel Abreu}
\address{NuHAG, Faculty of Mathematics, University of Vienna,
Oskar-Morgenstern-Platz 1, A-1090, Vienna, Austria}
\email{abreuluisdaniel@gmail.com}
\subjclass{}
\keywords{determinantal point processes, entanglement entropy,
Weyl-Heisenberg ensembles, hyperuniformity}
\thanks{The author was supported by the Austrian Science Fund (FWF) via the
project (P31225-N32). All data generated or analysed during this study are
included in this published article. On behalf of all authors, the
corresponding author states that there is no conflict of interest.}

\begin{abstract}
We show that, for a class of planar determinantal point processes (DPP) $%
\mathcal{X}$, the growth of the entanglement entropy $S(\mathcal{X}(\Omega ))
$ of $\mathcal{X}$\ on a compact region $\Omega \subset \mathbb{R}^{2d}$, is
related to the variance $\mathbb{V}\left( \mathcal{X}(\Omega )\right) $ as
follows:%
\begin{equation*}
\mathbb{V}\left( \mathcal{X}(\Omega )\right) \lesssim S\left( \mathcal{X(}%
\Omega \mathcal{)}\right) \lesssim \mathbb{V}\left( \mathcal{X}(\Omega
)\right) \text{.}
\end{equation*}%
Therefore, such DPPs satisfy an \emph{area law} $S(\mathcal{X(}\Omega 
\mathcal{))}\lesssim \left\vert \partial {\Omega }\right\vert $, where $%
\partial {\Omega }$ is the boundary of $\Omega $) if they are of \emph{Class
I hyperuniformity }($\mathbb{V}\left( \mathcal{X}(\Omega )\right) \lesssim
\left\vert \partial {\Omega }\right\vert $), while the \emph{area law is
violated} if they are of \emph{Class II hyperuniformity }(as \ $L\rightarrow
\infty $, $\mathbb{V}\left( \mathcal{X}(L\Omega )\right) \sim C_{\Omega
}L^{d-1}\log L$). As a result, the entanglement entropy of Weyl-Heisenberg
ensembles (a family of DPPs containing the Ginibre ensemble and Ginibre-type
ensembles in higher Landau levels), satisfies an area law, as a consequence
of its hyperuniformity.
\end{abstract}

\maketitle

\section{Introduction}

If one considers a partition of a many-particle state in two subregions, 
\emph{the entanglement entropy} measures the degree of entanglement between
the two regions, which is given by the von Neumann entropy of the reduced
state in one of the regions. Entanglement entropy is nowadays a widely
studied quantity in many-particle interacting systems \cite%
{EntaglementEntropy,Area,WidomConj,Ginibre,FL,NC}. In this note we interpret
the definition of entanglement entropy for fermionic states given in \cite[%
Proposition 7.2]{EntaglementEntropy}, in terms of planar determinantal point
process (DPP) in $\mathbb{R}^{2d}$. This allows to define the entanglement
entropy $S(\mathcal{X}(\Omega ))$ of a DPP $\mathcal{X}$\ in a compact
subregion $\Omega \subset \mathbb{R}^{2d}$, as $S(\mathcal{X}(\Omega ))=%
\mathrm{trace}(f(T_{\Omega }))$, where $f(x)=-x\ln x-(1-x)\ln (1-x)$ and $%
T_{\Omega }$ is a Toeplitz operator defined with the correlation kernel of
the DPP, with symbol the indicator function of $\Omega $ (see Section 2).
Motivated by this definition, we will show that, for a class of planar DPPs
for which $\mathrm{trace}\left( T_{\Omega }^{p}\left( 1-T_{\Omega }\right)
^{p}\right) $ is bounded for $0<p<1$, which include the Ginibre ensemble and
its higher Landau level versions \cite{Shirai}, the following relations
between the entanglement entropy $S(\mathcal{X}(\Omega ))$ and the variance $%
\mathbb{V}\left( \mathcal{X}(\Omega )\right) $ hold:%
\begin{equation}
\mathbb{V}\left( \mathcal{X}(\Omega )\right) \lesssim S\left( \mathcal{X(}%
\Omega \mathcal{)}\right) \lesssim \mathbb{V}\left( \mathcal{X}(\Omega
)\right) \text{.}  \label{ine}
\end{equation}%
The only related inequality we found in the literature is the one in \cite[%
(6)]{WidomConj}, which holds with no assumptions, but has a logarithm
correction term on the upper bound. The entanglement entropy is said to
satisfy an \emph{area law} if $S\left( \mathcal{X(}\Omega \mathcal{)}\right)
\lesssim \left\vert \partial {\Omega }\right\vert $, where $\left\vert
\partial {\Omega }\right\vert $ is the measure of the perimeter of $\Omega $
or, asymptotically, for a dilated region $R{\Omega }$, if $\ \mathbb{V}%
\left( \mathcal{X}(\Omega \mathcal{)}\right) \sim R^{d-1}$ as $R\rightarrow
\infty $. Our results imply an area law $S\left( \mathcal{X(}\Omega \mathcal{%
)}\right) \lesssim \left\vert \partial {\Omega }\right\vert $ when $\mathcal{%
X}$ is the \emph{infinite Ginibre ensemble }with kernel given as%
\begin{equation*}
K_{0}(z,w)=e^{-\frac{\pi }{2}(\left\vert z\right\vert ^{2}+\left\vert
w\right\vert ^{2})}e^{\pi \overline{z}w}
\end{equation*}%
and when $\mathcal{X}$ is one of the Ginibre-type ensembles \cite{Shirai},
defined with the reproducing kernel of the $n$ eigenspace of the Landau
operator $L_{z}:=-\partial _{z}\partial _{\overline{z}}+\pi \overline{z}%
\partial _{\overline{z}}$,%
\begin{equation*}
K_{n}(z,w)=e^{-\frac{\pi }{2}(\left\vert z\right\vert ^{2}+\left\vert
w\right\vert ^{2})}L_{n}(\pi \left\vert z-w\right\vert ^{2})e^{\pi \overline{%
z}w}\text{.}
\end{equation*}%
Our main result will be stated for \emph{The Weyl-Heisenberg ensemble} $%
\mathcal{X}_{g}$ on $\mathbb{R}^{2d}$ introduced in\emph{\ }\cite{APRT} and
studied further in \cite{abgrro17,Partial}, a family of DPPs depending on a
window function $g\in L^{2}(\mathbb{R}^{d})$, with correlation kernel%
\begin{equation*}
K_{g}(z,w)={K}_{g}((x,\xi ),(x^{\prime },\xi ^{\prime }))=\int_{\mathbb{R}%
^{d}}e^{2\pi i(\xi ^{\prime }-\xi )t}g(t-x^{\prime })\overline{g(t-x)}dt%
\text{.}
\end{equation*}%
When $g$ is a Gaussian, $K_{g}(z,w)$ becomes a weighted version of $%
K_{0}(z,w)$ and when $g$ is a Hermite function, it becomes a weighted
version of $K_{n}(z,w)$. More details about these specializations will be
given in section 3.

A DPP $\mathcal{X}$ is said to be \emph{hyperuniform of Class I} \cite%
{TorStil},\cite[(97) and Table 1]{HyperSurvey}, if $\ \mathbb{V}\left( 
\mathcal{X}(\Omega \mathcal{)}\right) \lesssim \left\vert \partial {\Omega }%
\right\vert $ or, asymptotically, for a dilated region $R{\Omega }$, if $\ 
\mathbb{V}\left( \mathcal{X}(\Omega \mathcal{)}\right) \sim R^{d-1}$ as $%
R\rightarrow \infty $. As a result of (\ref{ine}), area laws for the DPPs
considered in this paper will follow as a consequence of their Class I
hyperuniformity of rate $1$. Hyperuniform states of matter are correlated
systems characterized by the suppression of density fluctuations at large
scales \cite{HyperSurvey,TorStil,TorFermionic,GL,GL1,APRT}. While the
relation (\ref{ine}) suggests what seems to be a hitherto unnoticed relation
between the concepts of entanglement entropy and of hyperuniformity,
similarities between the entanglement entropy and variance fluctuations have
been empirically observed in several contexts \cite{NC}, suggesting that
both concepts may be used to quantify the level of supression of
fluctuations at large scales typical of a number of physical and
mathematical systems known as hyperuniform \cite[(97) and Table 1]%
{HyperSurvey}. The inequality (\ref{ine}) is a first step towards a
mathematical proof of this hypothesis.

The presentation of this note is organized as follows. The next section
contains the concepts of entanglement entropy and number variance for DPPs
and proves the inequality (\ref{ine}) under the assumptions on $\mathrm{trace%
}\left( T_{\Omega }^{p}\left( 1-T_{\Omega }\right) ^{p}\right) $. The third
section introduces some notions about the Weyl-Heisenberg ensemble, and (\ref%
{ine}) is assured to hold for this case, thanks to the bounds of $\mathrm{%
trace}\left( T_{\Omega }^{p}\left( 1-T_{\Omega }\right) ^{p}\right) $,\
recently obtained by Marceca and Romero \cite{SD}. We then state and prove
the bound $S\left( \mathcal{X}_{g}\mathcal{(}\Omega \mathcal{)}\right)
\lesssim \left\vert \partial {\Omega }\right\vert $ on the entanglement
entropy of Weyl-Heisenberg ensembles. A lower bound $\left\vert \partial {%
\Omega }\right\vert \lesssim S\left( \mathcal{X}_{g}\mathcal{(}\Omega 
\mathcal{)}\right) $ is also observed to hold under some extra assumptions,
and the important examples of Ginibre and of Shirai's Ginibre-type ensembles
on higher Landau levels \cite{Shirai} are used to illustrate the scope of
the result on Weyl-Heisenberg ensembles. In the last section, bounds on the
entropy using the construction of finite Weyl-Heisenberg ensembles \cite%
{abgrro17} are obtained.

\section{Entanglement entropy and variance of DPPs}

We refer to \cite{DetPointRand,Partial} for precise definitions and
background on DPPs. A locally integrable kernel $K(z,w)$ defines the
correlation kernel of a determinantal point process (DPP) distributing $%
\mathcal{X}\left( \Omega \right) $ points in $\Omega \subset \mathbb{R}^{2d}$%
, whose $k$-point intensities are given by $\rho _{k}(z_{1},...,z_{k})=\det
\left( K(z_{i},z_{j})\right) _{1\leq i,j\leq k}$. The $1$-point intensity of 
$\mathcal{X}$ is then given by $\rho _{1}(z)=K(z,z)$, allowing to compute
the expected number of points that fall in $\Omega $ as%
\begin{equation*}
\mathbb{E}\left[ \mathcal{X(}\Omega \mathcal{)}\right] \mathbb{=}%
\int_{\Omega }K(z,z)dz\text{,}
\end{equation*}%
while the number variance in $\Omega $ is given as (see \cite[pg. 40]%
{Variance} for a detailed proof): 
\begin{equation}
\mathbb{V}\left( \mathcal{X(}\Omega \mathcal{)}\right) =\mathbb{E}\left[ 
\mathcal{X(}\Omega \mathcal{)}^{2}\right] -\mathbb{E}\left[ \mathcal{X}_{g}%
\mathcal{(}\Omega \mathcal{)}\right] ^{2}=\int_{\Omega }K\left( z,z\right)
dz-\int_{\Omega ^{2}}\left\vert K\left( z,w\right) \right\vert ^{2}dzdw\text{%
.}  \label{var}
\end{equation}

Consider a compact set $\Omega \subset \mathbb{R}^{2d}$. The entanglement
entropy $S(\mathcal{X}(\Omega ))$ measures the degree of entanglement of the
DPP $\mathcal{X}$ reduced to the region $\Omega $. A DPP satisfies \emph{an
area law} if the leading term of the entanglement entropy grows at most
proportionally with the measure of the boundary of the partition defining
the reduced state \cite{Area,EntaglementEntropy}. In $\mathbb{R}^{2d}=\Omega
\cup \Omega ^{c}$ this corresponds to a growth of the order of the perimeter 
$\left\vert \delta \Omega \right\vert $. The set $\Omega \subseteq {\mathbb{R%
}^{2d}}$ is said to have \emph{finite perimeter} if its characteristic
function $1_{\Omega }$ is of bounded variation (the concept of `area law'
for the entanglement entropy would be, with this terminology, more precisely
named as `perimeter law', but we keep up with the traditional terminology).
In this case, its perimeter is $\left\vert \partial \Omega \right\vert :=%
\mathit{Var}(1_{\Omega })$.

Our analysis is based on associating to the kernel of $\mathcal{X}$, $K(z,w)$
(a locally integrable reproducing kernel of a Hilbert space $H\subset
L^{2}\left( \mathbb{R}^{2d}\right) $), the following operator:%
\begin{equation*}
(T_{\Omega }f)(z)=\int_{\Omega }f(w)\overline{K(z,w)}dw\text{,}
\end{equation*}%
where $dw$ stands for Lebesgue measure, mapping $f$ to a smooth function in $%
L^{2}\left( \mathbb{R}^{2d}\right) $ with most of its energy concentrated in
the region $\Omega $. Since $\Omega \subset \mathbb{R}^{2d}$\ is compact and 
$K(z,w)$ locally integrable, $T_{\Omega }$ is a compact positive
(self-adjoint) operator of trace class, and one can invoke the spectral
theorem to assure that $T_{\Omega }$ is diagonalized by an orthonormal set
of eigenfunctions $\{e_{n}^{\Omega }(z):n\geq 1\}$ with corresponding
eigenvalues $\{\lambda _{n}^{\Omega }:n\geq 1\}$ ordered non-increasingly.
The operator is positive and bounded by 1 (see \cite[Lemma 2.1]{AGR} for
details in the Weyl-Heisenberg case).

For the definition of entanglement entropy of a DPP on a region $\Omega $ we
will use the result in Proposition 7.2 of \cite{EntaglementEntropy}.

\begin{definition}
The entanglement entropy $S(\mathcal{X}(\Omega ))$ of the DPP $\mathcal{X}$
on a compact set $\Omega \subset \mathbb{R}^{2d}$ is defined in terms of $%
T_{\Omega }$ as%
\begin{equation*}
S(\mathcal{X}(\Omega ))=\mathrm{trace}(f(T_{\Omega }))\text{,}
\end{equation*}%
where%
\begin{equation}
f(x)=-x\ln x-(1-x)\ln (1-x)\text{.}  \label{f}
\end{equation}
\end{definition}

The traces of $T_{\Omega }$ and $T_{\Omega }^{2}$ are given by ($K\left(
z,z\right) =1$) 
\begin{align}
& \mathrm{trace}(T_{\Omega })=\int_{\Omega }K(z,z)dz=\mathbb{E}\left[ 
\mathcal{X(}\Omega \mathcal{)}\right] ={|\Omega |}=\sum_{n\geq 1}{\lambda
_{n}^{\Omega }},  \label{eq_trace1} \\
& \mathrm{trace}(T_{\Omega }^{2})=\int_{\Omega ^{2}}\left\vert K\left(
z,w\right) \right\vert ^{2}dzdw=\sum_{n\geq 1}{(\lambda _{n}^{\Omega }})^{2}%
\text{.}  \label{eq_trace2}
\end{align}%
and the number variance of $\mathcal{X}(\Omega )$, according to (\ref{var}),
by 
\begin{equation}
\mathbb{V}\left( \mathcal{X}(\Omega )\right) =\mathrm{trace}(T_{\Omega })-%
\mathrm{trace}(T_{\Omega }^{2})=\sum_{n\geq 1}{\lambda _{n}^{\Omega }-}%
\sum_{n\geq 1}{(\lambda _{n}^{\Omega }})^{2}\text{.}  \label{varrel}
\end{equation}

It has been drawn to the attention of the author by Gr\"{o}chenig \cite%
{Groch} that, for $x\in \left[ 0,1\right] $, the following inequality can be
easily proved:%
\begin{equation}
4x(x-1)\leq \frac{1}{\log 2}f(x)\text{.}  \label{lowerpointwise}
\end{equation}%
where $f(x)=-x\ln x-(1-x)\ln (1-x)$, so that $\mathbb{V}\left( \mathcal{X}%
(\Omega )\right) =\mathrm{trace}\left( T_{\Omega }-T_{\Omega }^{2}\right)
\leq \frac{1}{4\log 2}\mathrm{trace}(f(T_{\Omega }))$. This leads to a lower
bound for the entanglement entropy%
\begin{equation}
\mathbb{V}\left( \mathcal{X}(\Omega )\right) \lesssim S\left( \mathcal{X(}%
\Omega \mathcal{)}\right) \text{.}  \label{lower}
\end{equation}

Inequality (\ref{lower}) has been used before to show the violation of the
area law by fermionic process (see \cite{WidomConj} and the references
therein, where also upper inequalities for the entropy in terms of the
variance, with a log correction term, are obtained). For $x\in \left[ 0,1%
\right] $ one cannot expect a pointwise upper bound for $f(x)=-x\ln
x-(1-x)\ln (1-x)$ as a constant times $x(x-1)$, due to the singularities of $%
f(x)$. Nevertheless, under a boundedness conditions on the so-called
Schatten $p$-norms of $T_{\Omega }-T_{\Omega }^{2}$, it is possible to prove
an upper bound by relating the trace of the functions of positive
self-adjoint operators bounded by 1.

Our main results will depend on the following inequality, conditioned to a
bound on the Schatten $p$-norms of $T_{\Omega }-T_{\Omega }^{2}$.

\begin{proposition}
Let $\mathcal{X}$ be a DPP on $\mathbb{R}^{2d}$ such that the associated
operator $T_{\Omega }$ is self-adjoint, positive, bounded by 1 and is of
trace class satisfying, for $0<p<1$, 
\begin{equation}
\mathrm{trace}\left( T_{\Omega }^{p}\left( 1-T_{\Omega }\right) ^{p}\right)
\leq C\text{,}  \label{Shatten}
\end{equation}%
where $C$ depends on $\Omega $ and $p$. Then the entanglement entropy and
the variance of $\mathcal{X(}\Omega \mathcal{)}$ satisfy%
\begin{equation}
\mathbb{V}\left( \mathcal{X}(\Omega )\right) \lesssim S\left( \mathcal{X(}%
\Omega \mathcal{)}\right) \lesssim \mathbb{V}\left( \mathcal{X}(\Omega
)\right) \text{.}  \label{ineEE}
\end{equation}
\end{proposition}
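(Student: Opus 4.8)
The lower half of (\ref{ineEE}) is already in hand: it is precisely the inequality (\ref{lower}), obtained by applying the elementary pointwise bound (\ref{lowerpointwise}) to each eigenvalue $\lambda_n^\Omega$ of $T_\Omega$ and summing, since by (\ref{varrel}) one has $\mathbb{V}(\mathcal{X}(\Omega))=\mathrm{trace}(T_\Omega-T_\Omega^2)=\sum_{n\ge1}\lambda_n^\Omega(1-\lambda_n^\Omega)$ while $S(\mathcal{X}(\Omega))=\mathrm{trace}(f(T_\Omega))=\sum_{n\ge1}f(\lambda_n^\Omega)$. So the content left to prove is the upper bound $S(\mathcal{X}(\Omega))\lesssim\mathbb{V}(\mathcal{X}(\Omega))$, and the plan is to reduce it to a scalar comparison of $f(x)$ with powers of $x(1-x)$ on $[0,1]$ and then transfer that comparison to traces via the spectral theorem, using hypothesis (\ref{Shatten}).

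Fix $p\in(0,1)$. Since $T_\Omega$ and $1-T_\Omega$ are both functions of $T_\Omega$ they commute, so $T_\Omega^p(1-T_\Omega)^p=(T_\Omega(1-T_\Omega))^p$ and (\ref{Shatten}) reads $\sum_{n\ge1}\big(\lambda_n^\Omega(1-\lambda_n^\Omega)\big)^p\le C$. It then suffices to establish a pointwise bound of the form
\[
f(x)\le C_p\Big(x(1-x)+\big(x(1-x)\big)^p\Big),\qquad x\in[0,1],
\]
with $C_p$ depending only on $p$; summing over $n$ and inserting (\ref{Shatten}) and (\ref{varrel}) then yields at once $S(\mathcal{X}(\Omega))\le C_p\big(\mathbb{V}(\mathcal{X}(\Omega))+C\big)$ (in particular $S(\mathcal{X}(\Omega))<\infty$).

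The pointwise inequality is the only real content, and I expect it to be essentially the sole obstacle. I would prove it using the symmetry $f(x)=f(1-x)$ to reduce to $x\in[0,\tfrac12]$, where $\tfrac12 x\le x(1-x)\le x$. On that range the $(1-x)\ln\frac{1}{1-x}$ summand of $f$ is controlled linearly, $(1-x)\ln\frac{1}{1-x}\le 2x(1-x)$, using $-\ln(1-x)\le\frac{x}{1-x}\le 2x$; the delicate summand is $x\ln\frac1x$, and it is here that the assumption $p<1$ enters decisively: since $1-p>0$ we have $x^{1-p}\ln\frac1x\to0$ as $x\to0^+$, hence $x^{1-p}\ln\frac1x\le M_p$ on $(0,\tfrac12]$, so that $x\ln\frac1x\le M_p x^{p}\le 2^pM_p\big(x(1-x)\big)^p$. (One may phrase this more cleanly through $-\ln x\le\frac1p(x^{-p}-1)$ on $(0,1]$, which gives $-x\ln x\le\frac1p x^{1-p}$.) For $p=1$ the argument collapses — the logarithmic singularities of $f$ at $0$ and $1$ are not dominated by $x(1-x)$ — in accordance with the fact that the upper bound fails without a Schatten-type hypothesis. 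Finally, to read off (\ref{ineEE}) in the stated form from $S(\mathcal{X}(\Omega))\le C_p(\mathbb{V}(\mathcal{X}(\Omega))+C)$: if $\mathbb{V}(\mathcal{X}(\Omega))=0$ then every $\lambda_n^\Omega\in\{0,1\}$, so $f(\lambda_n^\Omega)=0$ and $S(\mathcal{X}(\Omega))=0$ as well; otherwise the additive constant is absorbed, $S(\mathcal{X}(\Omega))\le C_p\big(1+C/\mathbb{V}(\mathcal{X}(\Omega))\big)\,\mathbb{V}(\mathcal{X}(\Omega))$, the implied constant depending — exactly like $C$ in (\ref{Shatten}) — only on $\Omega$ and $p$, which is harmless for the Section 3 applications, where $C$ and $\mathbb{V}(\mathcal{X}(\Omega))$ grow at the same rate.
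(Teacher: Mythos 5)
Your argument is correct, but your upper bound follows a genuinely different route from the paper's. The lower bound is treated identically, via (\ref{lowerpointwise})--(\ref{lower}). For the upper bound, the paper follows the Charles--Estienne scheme: Step 1 bounds $\mathrm{trace}(P(T_{\Omega}))\lesssim \mathbb{V}\left( \mathcal{X}(\Omega )\right)$ for polynomials vanishing at $0$ and $1$, Step 2 bounds $\mathrm{trace}(h_{p}(T_{\Omega}))$ (by the variance for $p\geq 1$, by (\ref{Shatten}) for $0<p<1$), and Step 3 writes $f=g\,h_{p}$ with $g$ continuous and vanishing at the endpoints, invoking Weierstrass approximation twice. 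You instead prove the single scalar inequality $f(x)\leq C_{p}\bigl(x(1-x)+(x(1-x))^{p}\bigr)$ on $[0,1]$ and transfer it through the spectral theorem, using commutativity to write $T_{\Omega}^{p}(1-T_{\Omega})^{p}=(T_{\Omega}(1-T_{\Omega}))^{p}$; this is more elementary, avoids the two-stage polynomial approximation, and makes the constants explicit, giving $S\leq C_{p}(\mathbb{V}+C)$, with the degenerate case $\mathbb{V}=0$ correctly dispatched. The paper's trace-level argument buys generality (any continuous $f$ with $|f(t)|,|f(1-t)|=O(t^{p})$) and the option of making the coefficient of the term coming from (\ref{Shatten}) arbitrarily small; on the other hand its final estimate also carries the additive contribution $\epsilon C_{0}$, so your explicit form is an honest account of what either argument delivers, and, as you observe, it suffices for Section 3, where both $C$ and $\mathbb{V}$ are $O(\left\vert \partial \Omega \right\vert)$. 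One small slip in your parenthetical: $-\ln x\leq \tfrac{1}{p}(x^{-p}-1)$ yields $-x\ln x\leq \tfrac{1}{p}x^{1-p}$, which dominates $(x(1-x))^{p}$ only when $p\leq 1/2$; the clean version uses the exponent $1-p$, giving $-x\ln x\leq \tfrac{1}{1-p}x^{p}$. Your main argument via the boundedness of $x^{1-p}\ln(1/x)$ on $(0,\tfrac12]$ is unaffected.
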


\begin{proof}
Observe that $f(x)=-x\ln x-(1-x)\ln (1-x)$ belongs to the class of
continuous function such that $\left\vert f(t)\right\vert =O(t^{p})$\ and $%
\left\vert f(1-t)\right\vert =O(t^{p})$ as $t\rightarrow 0$ with $p>0$.
Strongly inspired by the idea of \cite[Theorem 6.2]{EntaglementEntropy}, we
will prove that, for $f$ in this class, if $\Omega \subset \mathbb{R}^{2d}$
is compact, then%
\begin{equation*}
\mathrm{trace}(f(T_{\Omega }))\lesssim \mathbb{V}\left( \mathcal{X}(\Omega
)\right) \text{.}
\end{equation*}%
The proof will use that $\mathrm{trace}$ is a positive linear functional, in
the sense that if $f\leq g$ then $\mathrm{trace}\left( f\right) \leq \mathrm{%
trace}\left( g\right) $, and relate $\mathrm{trace}(f(T_{\Omega }))$ to $%
\mathbb{V}\left( \mathcal{X}(\Omega )\right) $ using the identity (\ref%
{varrel}), first for polynomials vanishing at $0$ and $1$ and then for
functions of the form $f(z)=g(x)h_{p}(x)$ with $h_{p}(x)=x^{p}(1-x)^{p}$\
and $g\in C(\left[ 0,1\right] )$ such that $g(0)=g(1)=0$, using polynomial
approximation.

\textbf{Step 1.} In this step we prove that $\mathrm{trace}(P_{n}(T_{\Omega
}))\lesssim \mathbb{V}\left( \mathcal{X}(\Omega )\right) $, where $P_{n}$ is
a polynomial of degree $n$, such that $P(0)=P(1)=0$. For $k\geq 1$,%
\begin{equation*}
\mathrm{trace}\left( T_{\Omega }^{k}\right) -\mathrm{trace}\left( T_{\Omega
}^{k+1}\right) =\mathrm{trace}\left( T_{\Omega }^{k-1}\left( T_{\Omega
}-T_{\Omega }^{2}\right) \right) \text{.}
\end{equation*}%
Since $T_{\Omega }-T_{\Omega }^{2}$ is\ a non-negative defined operator, we
can use the inequality $\mathrm{trace}\left( AB\right) \leq \left\Vert
A\right\Vert \mathrm{trace}$\textrm{(}$B$\textrm{), }together with $%
\left\Vert T_{\Omega }^{k-1}\right\Vert \leq 1$, to obtain 
\begin{equation*}
\mathrm{trace}\left( T_{\Omega }^{k}\right) -\mathrm{trace}\left( T_{\Omega
}^{k+1}\right) \leq \mathrm{trace}\left( T_{\Omega }-T_{\Omega }^{2}\right) =%
\mathbb{V}\left( \mathcal{X}(\Omega )\right) \text{.}
\end{equation*}%
Since a general polynomial vanishing at $0$ and $1$ can be written as linear
combinations of $x^{k}-x^{k+1}$, we write%
\begin{equation*}
P_{n}(x)=\sum_{k=0}^{n}a_{k}\left( x^{k}-x^{k+1}\right)
\end{equation*}
and the above gives, by linearity, 
\begin{equation*}
\mathrm{trace}\left( P_{n}(T_{\Omega })\right) =\sum_{k=0}^{n}a_{k}\left( 
\mathrm{trace}\left( T_{\Omega }^{k}\right) -\mathrm{trace}\left( T_{\Omega
}^{k+1}\right) \right) \lesssim \mathbb{V}\left( \mathcal{X}(\Omega )\right) 
\text{.}
\end{equation*}%
\textbf{Step 2. }We show that, for every $p>0$, $\mathrm{trace}\left(
h_{p}\left( T_{\Omega }\right) \right) $ is bounded, where $%
h_{p}(x)=x^{p}(1-x)^{p}$, $0<x<1$. For $p\geq 1$ and $0<x<1$, we have $%
x^{p}(1-x)^{p}\leq x(1-x)$ and%
\begin{equation*}
\mathrm{trace}\left( h_{p}\left( T_{\Omega }\right) \right) =\mathrm{trace}%
\left( T_{\Omega }^{p}\left( 1-T_{\Omega }\right) ^{p}\right) \leq \mathrm{%
trace}\left( T_{\Omega }-T_{\Omega }^{2}\right) =\mathbb{V}\left( \mathcal{X}%
(\Omega )\right) \text{.}
\end{equation*}%
For $0<p<1$ and $0<x<1$, it follows from the hypothesis (\ref{Shatten}) that 
$\mathrm{trace}\left( h_{p}\left( T_{\Omega }\right) \right) $ is bounded by 
$C>0$. We have thus%
\begin{equation*}
\mathrm{trace}\left( h_{p}\left( T_{\Omega }\right) \right) \leq C_{0}\text{,%
}
\end{equation*}%
where $C_{0}=\max \{\mathbb{V}\left( \mathcal{X}(\Omega )\right) ,C\}$.

\textbf{Step 3. }For the extension to continuous functions $f$ such that $%
\left\vert f(t)\right\vert =O(t^{p})$\ and $\left\vert f(1-t)\right\vert
=O(t^{p})$ as $t\rightarrow 0$ with $p>0$, we use a polynomial approximation
argument as in \cite[Theorem 6.2]{EntaglementEntropy}. For a $p>0$ one can
write $f$ as $f(z)=g(x)h_{p}(x)$ with $h_{p}(x)=x^{p}(1-x)^{p}$\ and $g\in C(%
\left[ 0,1\right] )$ such that $g(0)=g(1)=0$. Given $\epsilon >0$ we can
invoke the Weierstrass approximation theorem to find a polynomial $P(x)$
such that $P(0)=P(1)=0$ and $\left\vert g-P\right\vert <\epsilon $. Thus, $%
\mathrm{trace}\left( f\left( T_{\Omega }\right) \right) =\mathrm{trace}%
\left( gh_{p}\left( T_{\Omega }\right) \right) $ and the polynomial
approximation of $g$ by $P$ allows one to write $g\leq P+\epsilon $ and 
\begin{equation}
\mathrm{trace}\left( f\left( T_{\Omega }\right) \right) =\mathrm{trace}%
\left( gh_{p}\left( T_{\Omega }\right) \right) \leq \mathrm{trace}\left(
Ph_{p}\left( T_{\Omega }\right) \right) +\epsilon \mathrm{trace}\left(
h_{p}\left( T_{\Omega }\right) \right) \text{.}  \label{Weier}
\end{equation}

Combining with Step 2, we arrive at 
\begin{equation}
\mathrm{trace}\left( f\left( T_{\Omega }\right) \right) \leq \mathrm{trace}%
\left( Ph_{p}\left( T_{\Omega }\right) \right) +\epsilon C_{0}\text{.}
\label{est1}
\end{equation}%
Since $P(0)=P(1)=0$, there exists a polynomial $P_{1}(x)$ such that $%
P(x)=P_{1}(x)h_{1}(x)$, leading to $P(x)h_{p}(x)=P_{1}(x)h_{p}(x)h_{1}(x)$.
This allows to control $\mathrm{trace}\left( Ph_{p}\left( T_{\Omega }\right)
\right) $, by writing $g(x)=P_{1}(x)h_{p}(x)$ and invoking Weierstrass
approximation of $g(x)$\ by another polynomial $P_{2}(x)$. For an $\epsilon
_{1}>0$ we obtain, since $g\leq P_{2}+\epsilon _{1}$,%
\begin{equation}
\mathrm{trace}\left( Ph_{p}\left( T_{\Omega }\right) \right) =\mathrm{trace}%
\left( gh_{1}\left( T_{\Omega }\right) \right) \leq \mathrm{trace}\left(
P_{2}h_{1}\left( T_{\Omega }\right) \right) +\epsilon _{1}\mathrm{trace}%
\left( h_{1}\left( T_{\Omega }\right) \right) \text{.}  \label{est2}
\end{equation}%
By Step 1, since $P_{2}(x)h_{1}(x)$ is a polynomial, 
\begin{equation*}
\mathrm{trace}\left( P_{2}h_{1}\left( T_{\Omega }\right) \right) \lesssim 
\mathbb{V}\left( \mathcal{X}(\Omega )\right) \text{.}
\end{equation*}%
Observing that%
\begin{equation*}
\mathrm{trace}\left( h_{1}\left( T_{\Omega }\right) \right) =\mathrm{trace}%
\left( T_{\Omega }-T_{\Omega }^{2}\right) =\mathbb{V}\left( \mathcal{X}%
(\Omega )\right) \text{,}
\end{equation*}%
then (\ref{est2}) leads to%
\begin{equation*}
\mathrm{trace}\left( Ph_{p}\left( T_{\Omega }\right) \right) \lesssim 
\mathbb{V}\left( \mathcal{X}(\Omega )\right) +\epsilon _{1}\mathbb{V}\left( 
\mathcal{X}(\Omega )\right) \text{.}
\end{equation*}%
It follows from (\ref{est1}) that%
\begin{equation*}
\mathrm{trace}\left( f\left( T_{\Omega }\right) \right) \lesssim \mathbb{V}%
\left( \mathcal{X}(\Omega )\right) +\epsilon _{1}\mathbb{V}\left( \mathcal{X}%
(\Omega )\right) +\epsilon C_{0}\text{.}
\end{equation*}%
Since $\epsilon $ and $\epsilon _{1}$ are at our disposal, this implies $%
\mathrm{trace}\left( f\left( T_{\Omega }\right) \right) \lesssim \mathbb{V}%
\left( \mathcal{X}(\Omega )\right) $.
\end{proof}

\section{Entanglement entropy of Weyl-Heisenberg ensembles}

The main result will be stated in terms of \emph{Weyl-Heisenberg ensembles}.
This includes as special cases the Ginibre ensemble and its higher Landau
levels versions. To motivate the choice of the correlation kernel, recall
that for $z=(x,\xi )\in \mathbb{R}^{2d}$, the short-time Fourier transform
of a function $f$ with respect to a window function $g\in L^{2}(\mathbb{R}%
^{d})$ is defined as \cite{Charly}: 
\begin{equation}
\mathcal{V}_{g}f(x,\xi )=\int_{\mathbb{R}^{d}}f(t)\overline{g(t-x)}e^{-2\pi
i\xi t}dt\text{.}  \label{Gabor}
\end{equation}%
For $d=1$ and $g(t)=h_{0}(t)=2^{1/4}e^{-\pi t^{2}}$, then, writing $z=x+i\xi 
$, then $\mathcal{V}_{h_{0}}f(x,-\xi )=e^{-i\pi x\xi }e^{-\frac{\pi }{2}%
\left\vert z\right\vert ^{2}}Bf(z)$\ where $Bf(z)$ is the Bargmann-Fock
transform%
\begin{equation*}
Bf(z)=2^{\frac{1}{4}}\int_{\mathbb{R}}f(t)e^{2\pi tz-\pi t^{2}-\frac{\pi }{2}%
z^{2}}dt\text{,}
\end{equation*}%
which maps $L^{2}(\mathbb{R})$ onto the Fock space of entire functions,
whose reproducing kernel is the kernel of the infinite Ginibre ensemble and
which, as a ressult, can be seen as a weighted version of $\mathcal{V}%
_{h_{0}}\left( L^{2}(\mathbb{R})\right) $. For\ choices of $g$ within the
family of Hermite functions $h_{n}(t)$, defined as in (\ref{Hermite}), one
obtains a sequence of transforms defined by $\mathcal{V}_{h_{n}}f(x,-\xi
)=e^{-i\pi x\xi }e^{-\frac{\pi }{2}\left\vert z\right\vert ^{2}}B^{(n)}(z)$,
and mapping $L^{2}(\mathbb{R})$ onto the eigenspaces of the Landau levels
operator, which are weighted versions of $\mathcal{V}_{h_{n}}\left( L^{2}(%
\mathbb{R})\right) $ \cite{Abre,APRT,abgrro17}.

\emph{The Weyl-Heisenberg ensemble}, introduced in\emph{\ }\cite{APRT} and
studied further in \cite{abgrro17,Partial}, is the family of DPPs $\mathcal{X%
}_{g}$ on $\mathbb{R}^{2d}$, with correlation kernel equal to the
reproducing kernel of $\mathcal{V}_{g}L^{2}(\mathbb{R}^{d})$: 
\begin{equation}
K_{g}(z,w)={K}_{g}((x,\xi ),(x^{\prime },\xi ^{\prime }))=\int_{\mathbb{R}%
^{d}}e^{2\pi i(\xi ^{\prime }-\xi )t}g(t-x^{\prime })\overline{g(t-x)}dt%
\text{,}  \label{eq:l1}
\end{equation}%
for some non-zero function $g\in L^{2}(\mathbb{R}^{d})$ with $\left\Vert
g\right\Vert _{L^{2}(\mathbb{R}^{d})}=1$\ and $(x,\xi ),(x^{\prime },\xi
^{\prime })\in \mathbb{R}^{2d}$. For $g$ a Hermite function, Weyl-Heisenberg
ensembles lead to the Ginibre type ensembles for higher Landau levels \cite%
{Shirai,APRT} (see the remark below) and to the Heisenberg family of DPPs 
\cite{Heisenberg}. The complex Ginibre ensemble as the prototypical
Weyl-Heisenberg ensemble follows by setting $d=1$ and choosing $g$ in~%
\eqref{eq:l1} to be the Gaussian $h_{0}(t)=2^{1/4}e^{-\pi t^{2}}$. The
resulting kernel is%
\begin{equation*}
{K}_{h_{0}}(z,w)=e^{i\pi (x^{\prime }\xi ^{\prime }-x\xi )}e^{-\frac{\pi }{2}%
(\left\vert z\right\vert ^{2}+\left\vert w\right\vert ^{2})}e^{\pi \overline{%
z}w},\qquad z=x+i\xi ,\,w=x^{\prime }+i\xi ^{\prime }.
\end{equation*}%
Modulo a phase factor, this is the kernel of the \emph{infinite Ginibre
ensemble} $K_{0}(z,w)=e^{-\frac{\pi }{2}(\left\vert z\right\vert
^{2}+\left\vert w\right\vert ^{2})}e^{\pi \overline{z}w}$. Choosing $%
h_{n}(t) $ a Hermite function, a similar relation holds between ${K}%
_{h_{n}}(z,w)$ and $K_{n}(z,w)$.

The area law is obtained for Berezin-Toeplitz operators on compact Kaehler
manifolds and for the Bargmann transform (including thus the first Landau
level case of the Ginibre DPP) in \cite{EntaglementEntropy}, but the
relation with the variance is not made explicit. In \cite{Ginibre}, a
proportionality relation between the entanglement entropy and the number
variance has been obtained for the \emph{finite} Ginibre ensemble (it is
unclear at the moment if the methods in this note can handle finite DPPs
since, in such cases, the higher order traces may be difficult to control).
For a discussion of the relations between entanglement entropy and variance
fluctuations in a broad sense, see \cite{NC}.

\begin{theorem}
Let $\Omega \subset \mathbb{R}^{2d}$\ compact.\textbf{\ }Let $K_{g}(z,w)$ be
the kernel of a Weyl-Heisenberg ensemble $\mathcal{X}_{g}$ with $g$
satisfying, for some $s\geq 1/2$, 
\begin{equation}
C_{g}=\left[ \int_{\mathbb{R}^{2d}}\left\vert V_{g}g(z)\right\vert dz\right]
^{2}\int_{\mathbb{R}^{2d}}(1+|z|)^{2s}\left\vert V_{g}g(z)\right\vert
^{2}dz<\infty \text{.}  \label{c}
\end{equation}%
Then the entanglement entropy of the Weyl-Heisenberg ensemble on $\Omega $
satisfies the area law%
\begin{equation*}
S\left( \mathcal{X}_{g}\mathcal{(}\Omega \mathcal{)}\right) \lesssim
\left\vert \partial {\Omega }\right\vert \text{.}
\end{equation*}
\end{theorem}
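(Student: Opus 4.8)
The plan is to reduce the statement to the Proposition of Section~2 together with two ingredients that are specific to the Weyl-Heisenberg setting. For the reproducing kernel $K_g$ of $\mathcal{V}_gL^2(\mathbb{R}^d)$ with $\|g\|_{L^2(\mathbb{R}^d)}=1$, the operator $T_\Omega$ is self-adjoint, positive, bounded by $1$ and of trace class by \cite[Lemma 2.1]{AGR} (the last point using that $\Omega$ is compact and $K_g$ locally integrable), so the only hypothesis of the Proposition still to be checked is the Schatten-type bound \eqref{Shatten}. Granting it, the Proposition yields $S(\mathcal{X}_g(\Omega))\lesssim\mathbb{V}(\mathcal{X}_g(\Omega))$, and the area law follows once the variance is controlled by the perimeter. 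Thus there are two substantial steps.

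\textbf{Step 1: the Schatten bound \eqref{Shatten}.} This is the heart of the matter, and I would import it from the recent work of Marceca and Romero \cite{SD}. Under hypothesis \eqref{c} — i.e. $V_gg\in L^1(\mathbb{R}^{2d})$ together with the weighted decay $\int_{\mathbb{R}^{2d}}(1+|z|)^{2s}|V_gg(z)|^2\,dz<\infty$ for some $s\geq 1/2$ — their estimates give $\mathrm{trace}(T_\Omega^p(1-T_\Omega)^p)\leq C$ for every $0<p<1$, with $C$ depending on $\Omega$, $p$ and $C_g$. This is precisely \eqref{Shatten}, so the Proposition applies and yields $\mathbb{V}(\mathcal{X}_g(\Omega))\lesssim S(\mathcal{X}_g(\Omega))\lesssim\mathbb{V}(\mathcal{X}_g(\Omega))$.

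\textbf{Step 2: the variance obeys an area law.} Here I would argue directly from \eqref{var}. Since $\|g\|_{L^2}=1$ we have $K_g(z,z)=1$, and since $K_g$ reproduces $\mathcal{V}_gL^2(\mathbb{R}^d)$, $\int_{\mathbb{R}^{2d}}|K_g(z,w)|^2\,dw=K_g(z,z)=1$. Subtracting, \eqref{var} becomes $\mathbb{V}(\mathcal{X}_g(\Omega))=\int_\Omega\int_{\Omega^c}|K_g(z,w)|^2\,dw\,dz$, a genuine boundary-layer quantity. Because $|K_g(z,w)|$ depends only on $z-w$ and equals $|V_gg(z-w)|$, the substitution $u=w-z$ turns this into $\int_{\mathbb{R}^{2d}}|V_gg(u)|^2\,|\Omega\setminus(\Omega-u)|\,du$, and for a set of finite perimeter one has $|\Omega\setminus(\Omega-u)|\leq\tfrac12|u|\,|\partial\Omega|$; hence $\mathbb{V}(\mathcal{X}_g(\Omega))\leq\tfrac12|\partial\Omega|\int_{\mathbb{R}^{2d}}|u|\,|V_gg(u)|^2\,du$, which is finite because $|u|\leq(1+|u|)^{2s}$ for $s\geq\tfrac12$ and \eqref{c} holds. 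This is exactly the Class~I hyperuniformity of the Weyl-Heisenberg ensemble, cf.\ \cite{APRT,abgrro17}. Chaining the two steps, $S(\mathcal{X}_g(\Omega))\lesssim\mathbb{V}(\mathcal{X}_g(\Omega))\lesssim|\partial\Omega|$, the asserted area law.

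The main obstacle is Step~1. On $[0,1]$ the function $x^p(1-x)^p$ is not dominated by $x(1-x)$ when $0<p<1$, so \eqref{Shatten} cannot be reduced to the variance the way it was for $p\geq1$ in Step~2 of the Proposition's proof; it genuinely requires the off-diagonal decay analysis of the phase-space Toeplitz operator carried out in \cite{SD}, which is precisely where the $L^1$ control of the ambiguity function and the weighted $L^2$-decay packaged in \eqref{c} are consumed. Step~2 and the reduction in the first paragraph are, by comparison, routine.
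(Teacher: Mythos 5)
Your proposal is correct and follows essentially the same route as the paper: the same reduction to Proposition 2.2 via the Marceca--Romero bound $\mathrm{trace}\left( T_{\Omega }^{p}(1-T_{\Omega })^{p}\right) \leq C$ under hypothesis (\ref{c}), followed by the same variance--perimeter estimate. The only difference is cosmetic: you prove the bound $\mathbb{V}\left( \mathcal{X}_{g}(\Omega )\right) \lesssim \left\vert \partial \Omega \right\vert$ by hand, rewriting the variance as $\int_{\Omega }\int_{\Omega ^{c}}\left\vert K_{g}(z,w)\right\vert ^{2}dw\,dz$ and using the translation estimate $\left\vert \Omega \setminus (\Omega -u)\right\vert \leq \tfrac{1}{2}\left\vert u\right\vert \left\vert \partial \Omega \right\vert$ for sets of finite perimeter, whereas the paper quotes Lemma 3.2 of \cite{AGR} with $\varphi =\left\vert V_{g}g\right\vert ^{2}$ --- the two computations are equivalent.
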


\begin{proof}
We follow \cite[(2.6)]{SD} and consider the Schatten quasinorm of the Hankel
operator such that $H^{\ast }H=T_{\Omega }-T_{\Omega }^{2}$. Then%
\begin{equation*}
\left\Vert H\right\Vert _{\widetilde{p}}^{\widetilde{p}}=\mathrm{trace}%
\left( \left( H^{\ast }H\right) ^{\frac{1}{2}}\right) ^{\widetilde{p}}=%
\mathrm{trace}\left( T_{\Omega }^{\widetilde{p}}\left( 1-T_{\Omega }\right)
^{\widetilde{p}}\right) ^{\frac{1}{2}}=\mathrm{trace}\left( h_{\widetilde{p}%
/2}\left( T_{\Omega }\right) \right) \text{,}
\end{equation*}%
where $h_{p}(x)=x^{p}(1-x)^{p}$.\ Thus, the results for $\widetilde{p}<2$ in
Proposition 3.1 in \cite{SD}, assuming (\ref{c}), assure, writing $p=\frac{1%
}{2}\widetilde{p}$ that, for $0<p<1$, $\mathrm{trace}\left( h_{\widetilde{p}%
/2}\left( T_{\Omega }\right) \right) =\mathrm{trace}\left( h_{p}\left(
T_{\Omega }\right) \right) =\mathrm{trace}\left( T_{\Omega }^{p}\left(
1-T_{\Omega }\right) ^{p}\right) $ is bounded by $C\left\vert \partial {%
\Omega }\right\vert >0$. Thus, we can apply Proposition 2.2 to yield%
\begin{equation*}
S\left( \mathcal{X}_{g}\mathcal{(}\Omega \mathcal{)}\right) =\mathrm{trace}%
(f(T_{\Omega }))\lesssim \mathbb{V}\left( \mathcal{X}_{g}(\Omega )\right) =%
\mathrm{trace}\left( T_{\Omega }-T_{\Omega }^{2}\right) \text{.}
\end{equation*}%
Let $\varphi \in {L^{1}({\mathbb{R}^{d}})}$ with $\int \varphi =1$. Then,
for a set $\Omega $ of finite perimeter $\left\vert \partial {\Omega }%
\right\vert $, Lemma 3.2 in \cite{AGR} gives 
\begin{equation*}
\lVert 1_{\Omega }\ast \varphi -1_{\Omega }\rVert _{L^{1}({\mathbb{R}^{2d}}%
)}\leq \left\vert \partial {\Omega }\right\vert \int_{\mathbb{R}%
^{2d}}\left\vert z\right\vert \left\vert \varphi (z)\right\vert dz\text{.}
\end{equation*}%
Applying this inequality with $\varphi (z)=\left\vert V_{g}g(z)\right\vert
^{2}$ and observing that $K_{g}(z,w)=V_{g}g(z-w)$ leads to%
\begin{eqnarray*}
\mathrm{trace}\left( T_{\Omega }-T_{\Omega }^{2}\right) &=&\left\vert
\int_{\Omega }\int_{\Omega }\varphi (z-w)dzdw-\int_{\Omega }dz\right\vert \\
&=&\left\vert \int_{\Omega }\left( 1_{\Omega }\ast \varphi (w)-1_{\Omega
}\right) dw\right\vert \leq \lVert 1_{\Omega }\ast \varphi -1_{\Omega
}\rVert _{L^{1}({\mathbb{R}^{2d}})}\leq C\left\vert \partial {\Omega }%
\right\vert \text{,}
\end{eqnarray*}%
where $C=\int_{\mathbb{R}^{2d}}\left\vert z\right\vert \left\vert
V_{g}g(z)\right\vert ^{2}dz$. This last bound has been obtained in a
different form in \cite{DeMarie}. The more direct proof presented is
implicit in \cite{AGR}.
\end{proof}

\begin{example}
The Landau operator acting on the Hilbert space $L^{2}\left( \mathbb{C},e^{-%
\frac{\pi }{2}\left\vert z\right\vert ^{2})}\right) $ can be defined as 
\begin{equation}
L_{z}:=-\partial _{z}\partial _{\overline{z}}+\pi \overline{z}\partial _{%
\overline{z}}\text{.}  \label{2.1.3}
\end{equation}%
The spectrum of $L_{z}$ is given by $\sigma (L_{z})=\{\pi n:n=0,1,2,\ldots
\} $. The eigenspaces have associated reproducing kernel \cite{AAZ} 
\begin{equation*}
K_{n}(z,w)=L_{n}(\pi \left\vert z-w\right\vert ^{2})e^{\pi \overline{z}w}%
\text{,}
\end{equation*}%
where $L_{n}$\ is a Laguerre polynomial. Let the window $g$ of the
Weyl-Heisenberg kernel be a Hermite function 
\begin{equation}
h_{n}(t)=\frac{2^{1/4}}{\sqrt{n!}}\left( \frac{-1}{2\sqrt{\pi }}\right)
^{n}e^{\pi t^{2}}\frac{d^{n}}{dt^{n}}\left( e^{-2\pi t^{2}}\right) ,\qquad
n\geq 0.  \label{Hermite}
\end{equation}%
Then 
\begin{equation*}
{K}_{h_{n}}(z,w)=e^{i\pi (x^{\prime }\xi ^{\prime }-x\xi )}e^{-\frac{\pi }{2}%
(\left\vert z\right\vert ^{2}+\left\vert w\right\vert ^{2})}L_{n}(\pi
\left\vert z-w\right\vert ^{2})e^{\pi \overline{z}w},\qquad z=x+i\xi
,\,w=x^{\prime }+i\xi ^{\prime }\text{.}
\end{equation*}%
Now, from Theorem 2.2, denoting by $\mathcal{X}_{n}$ the DPP associated to
the $nth$ Landau level, 
\begin{equation*}
S(\mathcal{X}_{n}(\Omega ))\lesssim \left\vert \partial {\Omega }\right\vert 
\text{.}
\end{equation*}%
Moreover, from \cite[Theorem 1.1]{Shirai} (see also \cite[page 3]%
{HyperbolicDPP} for an alternative proof), one has $S(\mathcal{X}%
_{n}(D_{r}))\sim C_{n}r$ as $r\rightarrow \infty $. It follows that%
\begin{equation*}
S(\mathcal{X}_{n}(D_{r}))\sim Cr\text{,}
\end{equation*}%
as $r\rightarrow \infty $, for some constant $C$.\ This is an area law (in $%
\mathbb{R}^{2}$) for the entanglement entropy of integer quantum Hall states
modelled by DPP on higher Landau levels (see also the limit case $\beta =1$
in Theorem 2.5 of \cite{FL}).
\end{example}

\begin{remark}
Now, putting together Proposition 4.2 and Lemma 4.3 of \cite{DeMarie}, and (%
\ref{lower}) we realize, that, under certain conditions on $g$, we have $%
\left\vert \partial {\Omega }\right\vert \lesssim \mathbb{V}\left( \mathcal{X%
}_{g}(\Omega )\right) $. Thus, under the conditions of Proposition 4.2 and
Lemma 4.3 in \cite{DeMarie}, we have a double bound for the growth of the
entanglement entropy of the Weyl-Heisenberg ensemble on $\Omega $:%
\begin{equation*}
\left\vert \partial {\Omega }\right\vert \lesssim S\left( \mathcal{X}_{g}%
\mathcal{(}\Omega \mathcal{)}\right) \lesssim \left\vert \partial {\Omega }%
\right\vert \text{.}
\end{equation*}
\end{remark}

The condition (\ref{Shatten}) has been verified in \cite{EntaglementEntropy}
under the assumption of Gaussian decay of the kernel, and the analysis
includes fermionic states on a K\"{a}hler manifold and the infinite Ginibre
ensemble. For the kernel corresponding to the Weyl-Heisenberg ensemble, the
first bounds were obtained in \cite{DeMarie} and the moderate decay (\ref{c}%
) required considerable technical work \cite{SD}.

\begin{remark}
For general $d$, class II hyperuniformity is characterized by the following
asymptotic growth of the variance on a compact region $\Omega \subset 
\mathbb{R}^{d}$ dilated by $L>0$%
\begin{equation*}
\mathbb{V}\left( \mathcal{X}(L\Omega )\right) \sim C_{\Omega }L^{d-1}\log L%
\text{, \ \ \ \ }L\rightarrow \infty \text{.}
\end{equation*}%
Thus, just using inequality (\ref{lower}), (which holds without assumptions
on the kernel, since it follows from an inequality valid pointwise), we
conclude at once the following: \emph{for a DPP }$X$\emph{\ in the Class II
hyperuniformity},%
\begin{equation*}
S\left( \mathcal{X}(L\Omega )\right) \geq O(L^{d-1}\log L)\text{,}
\end{equation*}%
as $L\rightarrow \infty $ leading to the \emph{violation of the area law},
due to the $\log L$ correction. Thus, every \emph{Class II hyperuniform DPP
violates the area law.}
\end{remark}

\section{Entanglement entropy and finite Weyl-Heisenberg ensembles}

A feature of the Weyl-Heisenberg ensemble is the possibility of constructing
finite-dimensional DPPs with first point intensity converging to the
indicator domain of a pre-defined compact region $\Omega $. Details of such
finite dimensional constructions are given in \cite{abgrro17}, where it is
shown, in the Hermite window case, that the resulting processes are closely
related to the finite polyanalytic Ginibre ensembles of \cite{HH}. We will
now sketch the construction of finite Weyl-Heisenberg ensembles. Since $%
\{e_{n}^{\Omega }(z):n\geq 1\}$ spans the space with reproducing kernel $%
K_{g}(z,w)$, we have%
\begin{equation*}
K_{g}(z,w)=\sum_{n\geq 1}e_{n}^{\Omega }(z)\overline{e_{n}^{\Omega }(w)}%
\text{.}
\end{equation*}

Now we define the finite Weyl-Heisenberg ensemble as follows (see the
introduction of \cite{abgrro17} for details).

\begin{definition}
Let $N_{\Omega }=\left\lfloor \Omega \right\rfloor $\ be the smallest
integer greater than or equal to $\left\vert \Omega \right\vert $.\ \emph{%
The finite Weyl-Heisenberg ensemble }$\mathcal{X}_{g}^{N_{\Omega }}$\emph{\ }%
is the determinantal point process (DPP) associated with the truncated kernel%
\begin{equation*}
K_{g}^{N_{\Omega }}(z,w)=\sum_{n=1}^{N_{\Omega }}e_{n}^{\Omega }(z)\overline{%
e_{n}^{\Omega }(w)}\text{.}
\end{equation*}
\end{definition}

\begin{example}
Consider the Gaussian $h_{0}(t)=2^{1/4}e^{-\pi t^{2}}$ leading to the
infinite Ginibre ensemble kernel%
\begin{equation}
K_{h_{0}}(z,w)=e^{i\pi (x^{\prime }\xi ^{\prime }-x\xi )}e^{-\frac{\pi }{2}%
(\left\vert z\right\vert ^{2}+\left\vert w\right\vert ^{2})}e^{\pi z%
\overline{w}}\text{.}  \label{Ginf}
\end{equation}%
Denote by $\left\vert D_{R}\right\vert =\pi R^{2}$ the area of the disc. The
eigenfunctions of 
\begin{equation*}
(T_{D_{R}}f)(z)=\int_{D_{R}}f(w)\overline{K_{h_{0}}(z,w)}dw\text{,}
\end{equation*}%
are $e_{n+1}^{N_{D_{R}}}(z)=\left( \frac{\pi ^{j}}{j!}\right) ^{\frac{1}{2}%
}e^{-i\pi x\xi }e^{-\frac{\pi }{2}\left\vert z\right\vert ^{2}}z^{n}$. The
corresponding kernel of the finite Weyl-Heisenberg ensemble on $D_{R}$ is
then 
\begin{equation}
K_{h_{0}}^{N_{D_{R}}}(z,w)=e^{i\pi (x^{\prime }\xi ^{\prime }-x\xi )}e^{-%
\frac{\pi }{2}(\left\vert z\right\vert ^{2}+\left\vert w\right\vert
^{2})}\sum_{n=0}^{N_{D_{R}}-1}\frac{\left( \pi z\overline{w}\right) ^{n}}{n!}%
\text{,}  \label{Gfin}
\end{equation}%
where $N_{D_{R}}=\left\lfloor \pi R^{2}\right\rfloor $. This is, modulo a
phase factor, the kernel of the finite Ginibre ensemble, obtained by
truncating the expansion of the exponential $e^{\pi z\overline{w}}$.
\end{example}

We now provide a bound on $S(\mathcal{X}_{g}(\Omega ))$ involving the number
of points of $\mathcal{X}_{g}^{N_{\Omega }}$ that in average fall in $\Omega 
$\ and which can be explicitly computed in terms of the first eigenvalues of 
$T_{\Omega }$.

\begin{theorem}
Let $\Omega \subset \mathbb{R}^{2d}$\ compact and $g$ satisfying (\ref{c}).
The entanglement entropy of the Weyl-Heisenberg ensemble on $\Omega $
satisfies%
\begin{equation*}
S\left( \mathcal{X}_{g}\mathcal{(}\Omega \mathcal{)}\right) \lesssim
N_{\Omega }-\mathbb{E}\left( \mathcal{X}_{g}^{N_{\Omega }}(\Omega )\right)
\end{equation*}%
or%
\begin{equation*}
S\left( \mathcal{X}_{g}\mathcal{(}\Omega \mathcal{)}\right) \lesssim
N_{\Omega }{-}\sum_{n=1}^{N_{\Omega }}\lambda _{n}^{\Omega }\text{.}
\end{equation*}
\end{theorem}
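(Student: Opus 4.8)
The plan is to combine the area-law bound $S(\mathcal{X}_g(\Omega))\lesssim \mathbb{V}(\mathcal{X}_g(\Omega))$ already established in Theorem 3.2 (valid under hypothesis \eqref{c}) with an elementary spectral estimate for $\mathbb{V}(\mathcal{X}_g(\Omega))=\sum_{n\geq 1}\lambda_n^\Omega-\sum_{n\geq 1}(\lambda_n^\Omega)^2$ in terms of the first $N_\Omega$ eigenvalues of $T_\Omega$. First I would recall from \eqref{varrel} that
\begin{equation*}
\mathbb{V}\left(\mathcal{X}_g(\Omega)\right)=\sum_{n\geq 1}\lambda_n^\Omega(1-\lambda_n^\Omega),
\end{equation*}
and from \eqref{eq_trace1} that $\sum_{n\geq 1}\lambda_n^\Omega=|\Omega|\leq N_\Omega$. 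Since each $\lambda_n^\Omega\in[0,1]$ we have $\lambda_n^\Omega(1-\lambda_n^\Omega)\leq 1-\lambda_n^\Omega$ and also $\lambda_n^\Omega(1-\lambda_n^\Omega)\leq \lambda_n^\Omega$, so the variance can be split at the index $N_\Omega$ and bounded above by $\sum_{n=1}^{N_\Omega}(1-\lambda_n^\Omega)+\sum_{n>N_\Omega}\lambda_n^\Omega$.

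The key step is to control the tail $\sum_{n>N_\Omega}\lambda_n^\Omega$ by the head deficit $\sum_{n=1}^{N_\Omega}(1-\lambda_n^\Omega)$. This follows from the trace identity: since $\sum_{n\geq 1}\lambda_n^\Omega=|\Omega|\leq N_\Omega$, we get
\begin{equation*}
\sum_{n>N_\Omega}\lambda_n^\Omega=|\Omega|-\sum_{n=1}^{N_\Omega}\lambda_n^\Omega\leq N_\Omega-\sum_{n=1}^{N_\Omega}\lambda_n^\Omega=\sum_{n=1}^{N_\Omega}(1-\lambda_n^\Omega).
\end{equation*}
Therefore $\mathbb{V}(\mathcal{X}_g(\Omega))\leq 2\sum_{n=1}^{N_\Omega}(1-\lambda_n^\Omega)=2\bigl(N_\Omega-\sum_{n=1}^{N_\Omega}\lambda_n^\Omega\bigr)$. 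Combining with Theorem 3.2 gives $S(\mathcal{X}_g(\Omega))\lesssim \mathbb{V}(\mathcal{X}_g(\Omega))\lesssim N_\Omega-\sum_{n=1}^{N_\Omega}\lambda_n^\Omega$, which is the second displayed inequality.

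For the first form of the bound, I would identify $\mathbb{E}\bigl(\mathcal{X}_g^{N_\Omega}(\Omega)\bigr)$ with $\sum_{n=1}^{N_\Omega}\lambda_n^\Omega$: by Definition 4.1 the finite ensemble has kernel $K_g^{N_\Omega}(z,w)=\sum_{n=1}^{N_\Omega}e_n^\Omega(z)\overline{e_n^\Omega(w)}$, so $\mathbb{E}\bigl(\mathcal{X}_g^{N_\Omega}(\Omega)\bigr)=\int_\Omega K_g^{N_\Omega}(z,z)\,dz=\sum_{n=1}^{N_\Omega}\int_\Omega|e_n^\Omega(z)|^2\,dz=\sum_{n=1}^{N_\Omega}\langle T_\Omega e_n^\Omega,e_n^\Omega\rangle=\sum_{n=1}^{N_\Omega}\lambda_n^\Omega$, using that $e_n^\Omega$ are the (orthonormal) eigenfunctions of $T_\Omega$ and that $\int_\Omega |e_n^\Omega|^2=\langle T_\Omega e_n^\Omega,e_n^\Omega\rangle$. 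Substituting this identity into the previous bound yields $S(\mathcal{X}_g(\Omega))\lesssim N_\Omega-\mathbb{E}\bigl(\mathcal{X}_g^{N_\Omega}(\Omega)\bigr)$.

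I do not expect a genuine obstacle here: the only points requiring a little care are (i) justifying the splitting of the infinite sum at $N_\Omega$ (legitimate since all terms are nonnegative and the series converges because $T_\Omega$ is trace class), and (ii) verifying that the eigenvalues are ordered non-increasingly so that the ``head'' indices $1,\dots,N_\Omega$ are the natural ones — this is already assumed in Section 2 where the $\lambda_n^\Omega$ are introduced in non-increasing order. The identification $\int_\Omega|e_n^\Omega|^2=\lambda_n^\Omega$ is immediate from $(T_\Omega f)(z)=\int_\Omega f(w)\overline{K_g(z,w)}\,dw$ and the reproducing property, so the whole argument is a short computation once Theorem 3.2 is invoked.
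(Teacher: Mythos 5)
Your proposal is correct and follows essentially the same route as the paper: split $\mathbb{V}(\mathcal{X}_g(\Omega))=\sum_n\lambda_n^\Omega(1-\lambda_n^\Omega)$ at the index $N_\Omega$, bound the head by $\sum_{n\le N_\Omega}(1-\lambda_n^\Omega)$ and the tail via the trace identity $\sum_{n\ge 1}\lambda_n^\Omega=|\Omega|\le N_\Omega$, identify $\mathbb{E}(\mathcal{X}_g^{N_\Omega}(\Omega))=\sum_{n=1}^{N_\Omega}\lambda_n^\Omega$, and invoke the bound $S\lesssim\mathbb{V}$ from the preceding theorem. No gaps; the argument matches the paper's proof up to a minor rearrangement of the tail estimate.
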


\begin{proof}
The $1$-point intensity of $\mathcal{X}_{g}^{N_{\Omega }}$ is%
\begin{equation*}
\rho _{1}^{N_{\Omega }}(z)=K_{g}^{N_{\Omega }}\left( z,z\right)
=\sum_{n=1}^{N_{\Omega }}\left\vert e_{n}^{\Omega }(z)\right\vert ^{2}\text{.%
}
\end{equation*}%
Thus,%
\begin{eqnarray}
\mathbb{E}\left( \mathcal{X}_{g}^{N_{\Omega }}(\Omega )\right)
&=&\int_{\Omega }K_{g}^{N_{\Omega }}(z,z)dz  \notag \\
&=&\sum_{n=1}^{N_{\Omega }}\int_{\Omega }\left\vert e_{n}^{\Omega
}(z)\right\vert ^{2}\,dz=\sum_{n=1}^{N_{\Omega }}{\lambda _{n}^{\Omega }}
\label{E_}
\end{eqnarray}%
and 
\begin{eqnarray}
\mathbb{V}\left( \mathcal{X}_{g}(\Omega )\right) &=&\sum_{n\geq 1}{\lambda
_{n}^{\Omega }}-\sum_{n\geq 1}{(\lambda _{n}^{\Omega }})^{2}=\sum_{n=1}^{N_{%
\Omega }}\lambda _{n}^{\Omega }(1-\lambda _{n}^{\Omega })+\sum_{n>N_{\Omega
}}\lambda _{n}^{\Omega }(1-\lambda _{n}^{\Omega })  \notag \\
&\leq &\sum_{n=1}^{N_{\Omega }}(1-\lambda _{n}^{\Omega })+\sum_{n>N_{\Omega
}}\lambda _{n}^{\Omega }  \label{ineq} \\
&=&N_{\Omega }-\sum_{n=1}^{N_{\Omega }}\lambda _{n}^{\Omega }+\mathrm{trace}%
(T_{\Omega })-\sum_{n=1}^{N_{\Omega }}\lambda _{n}^{\Omega }  \notag \\
&\leq &2N_{\Omega }-2\mathbb{E}\left( \mathcal{X}_{g}^{N_{\Omega }}(\Omega
)\right) \text{.}  \notag
\end{eqnarray}%
The result follows from the upper bound $S\left( \mathcal{X}_{g}\mathcal{(}%
\Omega \mathcal{)}\right) \lesssim \mathbb{V}\left( \mathcal{X}_{g}(\Omega
)\right) $.
\end{proof}

We finally bound the entanglement entropy of the Weyl-Heisenberg ensemble on 
$\Omega $ by the deviation of the $1$-point intensity $\rho _{1}^{N_{\Omega
}}(z)$ of the finite Weyl-Heisenberg ensemble $\mathcal{X}_{g}^{N_{\Omega }}$
from the flat density $1_{\Omega }$.

\begin{theorem}
Let $\Omega \subset \mathbb{R}^{2d}$\ compact and $g$ satisfying (\ref{c}).
The entanglement entropy of the Weyl-Heisenberg ensemble on $\Omega $
satisfies%
\begin{equation*}
\int_{\mathbb{R}^{2d}}\left\vert \rho _{1}^{N_{\Omega }}(z)-1_{\Omega
}(z)\right\vert \,dz\lesssim S\left( \mathcal{X}_{g}\mathcal{(}\Omega 
\mathcal{)}\right) \lesssim \int_{\mathbb{R}^{2d}}\left\vert \rho
_{1}^{N_{\Omega }}(z)-1_{\Omega }(z)\right\vert \,dz\text{.}
\end{equation*}
\end{theorem}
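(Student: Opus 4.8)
The plan is to show that the quantity $\int_{\mathbb{R}^{2d}}|\rho_1^{N_\Omega}(z)-1_\Omega(z)|\,dz$ is comparable, up to constants, to the variance $\mathbb{V}(\mathcal{X}_g(\Omega))$, and then invoke the two-sided bound $\mathbb{V}(\mathcal{X}_g(\Omega))\lesssim S(\mathcal{X}_g(\Omega))\lesssim\mathbb{V}(\mathcal{X}_g(\Omega))$ furnished by Proposition 2.2 (applicable since $g$ satisfies (\ref{c})). So the whole theorem reduces to the deterministic estimate
\begin{equation*}
\mathbb{V}\left(\mathcal{X}_g(\Omega)\right)\asymp\int_{\mathbb{R}^{2d}}\left|\rho_1^{N_\Omega}(z)-1_\Omega(z)\right|\,dz\text{,}
\end{equation*}
which is purely a statement about the eigenvalues $\lambda_n^\Omega$ of $T_\Omega$ together with the eigenfunctions $e_n^\Omega$.

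First I would unwind the left-hand integral. Writing $\rho_1^{N_\Omega}(z)=\sum_{n=1}^{N_\Omega}|e_n^\Omega(z)|^2$ and splitting $\mathbb{R}^{2d}=\Omega\cup\Omega^c$, we get
\begin{equation*}
\int_{\mathbb{R}^{2d}}\left|\rho_1^{N_\Omega}(z)-1_\Omega(z)\right|\,dz=\int_{\Omega}\left|1-\rho_1^{N_\Omega}(z)\right|\,dz+\int_{\Omega^c}\rho_1^{N_\Omega}(z)\,dz\text{.}
\end{equation*}
The second term equals $\sum_{n=1}^{N_\Omega}(1-\lambda_n^\Omega)$, since $\int_{\Omega^c}|e_n^\Omega|^2=1-\lambda_n^\Omega$ by normalization. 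For the first term, the natural move is to bound $\int_\Omega|1-\rho_1^{N_\Omega}|\le\int_\Omega(1-\rho_1^{N_\Omega})+2\int_\Omega(\rho_1^{N_\Omega}-1)_+$; but more cleanly, since $\mathbb{E}(\mathcal{X}_g^{N_\Omega}(\Omega))=\sum_{n=1}^{N_\Omega}\lambda_n^\Omega$ by (\ref{E_}), one has $\int_\Omega(1-\rho_1^{N_\Omega})\,dz=N_\Omega-\sum_{n=1}^{N_\Omega}\lambda_n^\Omega=\sum_{n=1}^{N_\Omega}(1-\lambda_n^\Omega)$, which is the same sum again. So modulo controlling the absolute value versus the signed integral, the left-hand side is of the order $\sum_{n=1}^{N_\Omega}(1-\lambda_n^\Omega)+\sum_{n>N_\Omega}\lambda_n^\Omega$ — and this is precisely the quantity $N_\Omega-\mathbb{E}(\mathcal{X}_g^{N_\Omega}(\Omega))+\mathrm{trace}(T_\Omega)-\sum_{n=1}^{N_\Omega}\lambda_n^\Omega$ that appears inside the chain of inequalities in the proof of Theorem 4.3. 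Comparing with (\ref{ineq}), the upper bound $\mathbb{V}(\mathcal{X}_g(\Omega))\le 2N_\Omega-2\mathbb{E}(\mathcal{X}_g^{N_\Omega}(\Omega))$ from that proof already gives one of the two comparisons once we identify $N_\Omega-\mathbb{E}(\mathcal{X}_g^{N_\Omega}(\Omega))$ with (half of, up to the absolute-value subtlety) the target integral.

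For the reverse direction $\int_{\mathbb{R}^{2d}}|\rho_1^{N_\Omega}-1_\Omega|\lesssim\mathbb{V}(\mathcal{X}_g(\Omega))$, I would use $0\le\lambda_n^\Omega\le1$ to write each summand $1-\lambda_n^\Omega\le \frac{1}{\lambda_n^\Omega}\lambda_n^\Omega(1-\lambda_n^\Omega)$ for $n\le N_\Omega$ where eigenvalues are bounded below — but since $N_\Omega=\lceil|\Omega|\rceil$ is chosen exactly so that roughly the first $|\Omega|$ eigenvalues are close to $1$ and the rest close to $0$ (the plunge-region heuristic), the honest route is: for $n\le N_\Omega$ use $1-\lambda_n^\Omega\le 2\lambda_n^\Omega(1-\lambda_n^\Omega)$ whenever $\lambda_n^\Omega\ge1/2$, and for $n>N_\Omega$ use $\lambda_n^\Omega\le 2\lambda_n^\Omega(1-\lambda_n^\Omega)$ whenever $\lambda_n^\Omega\le1/2$. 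Summing over the "bad" indices where these fail is controlled because the number of eigenvalues in the bulk $[1/2,1)$ that are not essentially $1$, and in $(0,1/2]$ that are not essentially $0$, is itself $O(\mathbb{V})$ — this is exactly the content of the Schatten/Hankel bounds of \cite{SD} already invoked. So each side is sandwiched between $\mathbb{V}(\mathcal{X}_g(\Omega))$ and a constant multiple thereof, and then Proposition 2.2 closes the loop.

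The main obstacle is the absolute value in $\int_\Omega|1-\rho_1^{N_\Omega}(z)|\,dz$: on $\Omega$ the finite density $\rho_1^{N_\Omega}$ need not be $\le1$ pointwise, so one cannot simply replace the $L^1$ norm by the signed integral $N_\Omega-\mathbb{E}(\mathcal{X}_g^{N_\Omega}(\Omega))$. The fix is to control the overshoot $\int_\Omega(\rho_1^{N_\Omega}-1)_+$; since $\rho_1^{N_\Omega}(z)\le K_g(z,z)=1$ for the full (infinite) kernel only after subtracting the tail $\sum_{n>N_\Omega}|e_n^\Omega(z)|^2$, one has $\rho_1^{N_\Omega}(z)-1=-\sum_{n>N_\Omega}|e_n^\Omega(z)|^2\le0$ on all of $\mathbb{R}^{2d}$ if the $e_n^\Omega$ are the eigenfunctions of $T_\Omega$ and $K_g(z,z)\equiv1$ — i.e. the overshoot is in fact zero, because $\sum_{n\ge1}|e_n^\Omega(z)|^2=K_g(z,z)=1$. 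Once that is observed, $\int_\Omega|1-\rho_1^{N_\Omega}|=\int_\Omega\sum_{n>N_\Omega}|e_n^\Omega|^2$, and the whole integral collapses to $\sum_{n>N_\Omega}\int_\Omega|e_n^\Omega|^2+\sum_{n\le N_\Omega}\int_{\Omega^c}|e_n^\Omega|^2$, a clean expression in the $\lambda_n^\Omega$ that is manifestly comparable to $\mathbb{V}$ by the argument above. I would make sure to state this pointwise identity $\sum_{n\ge1}|e_n^\Omega(z)|^2=1$ (valid because $K_g$ is the reproducing kernel normalized so $K_g(z,z)=1$) explicitly, as it is the linchpin that removes the absolute-value difficulty.
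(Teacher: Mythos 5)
Your overall architecture is the paper's: reduce the theorem to the two-sided comparison $\mathbb{V}(\mathcal{X}_g(\Omega))\asymp\int_{\mathbb{R}^{2d}}|\rho_1^{N_\Omega}(z)-1_\Omega(z)|\,dz$ and then close with Proposition 2.2. The half you actually carry out agrees with the paper's computation: the pointwise identity $\sum_{n\ge1}|e_n^{\Omega}(z)|^2=K_g(z,z)=1$ gives $\rho_1^{N_\Omega}\le 1$ everywhere, so the $L^1$ deviation collapses exactly to $\sum_{n\le N_\Omega}(1-\lambda_n^{\Omega})+\sum_{n>N_\Omega}\lambda_n^{\Omega}$, and (\ref{ineq}) then yields $\mathbb{V}(\mathcal{X}_g(\Omega))\le\int|\rho_1^{N_\Omega}-1_\Omega|$, hence $S(\mathcal{X}_g(\Omega))\lesssim\int|\rho_1^{N_\Omega}-1_\Omega|$. (Minor slip along the way: $\int_\Omega(1-\rho_1^{N_\Omega})\,dz=|\Omega|-\sum_{n\le N_\Omega}\lambda_n^{\Omega}=\sum_{n>N_\Omega}\lambda_n^{\Omega}$, not $N_\Omega-\sum_{n\le N_\Omega}\lambda_n^{\Omega}$; your final collapsed expression is nonetheless correct.)

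The genuine gap is in the reverse inequality $\int|\rho_1^{N_\Omega}-1_\Omega|\lesssim\mathbb{V}(\mathcal{X}_g(\Omega))$, which the paper does not reprove but delegates to Steps 2 and 3 of Theorem 1.5 in \cite{APR}. Your pointwise bounds handle only $n\le N_\Omega$ with $\lambda_n^{\Omega}\ge 1/2$ and $n>N_\Omega$ with $\lambda_n^{\Omega}\le 1/2$, and your treatment of the remaining ``bad'' indices does not work as stated: a bad index $n\le N_\Omega$ may have $\lambda_n^{\Omega}$ arbitrarily close to $0$ (and a bad $n>N_\Omega$ arbitrarily close to $1$), so it contributes nearly $1$ to the $L^1$ deviation while contributing essentially nothing to $\mathbb{V}$. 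Such eigenvalues lie outside any plunge region $[\delta,1-\delta]$, so neither the trivial count (each eigenvalue in $[\delta,1-\delta]$ contributes at least $\delta(1-\delta)$ to $\mathbb{V}$) nor the bounds of \cite{SD} --- which control $\mathrm{trace}\left(T_\Omega^{p}(1-T_\Omega)^{p}\right)$ by $|\partial\Omega|$, not by $\mathbb{V}$ --- counts them. What actually rescues the claim is the trace identity combined with monotonicity: setting $A=\sum_{n\le N_\Omega}(1-\lambda_n^{\Omega})$ and $B=\sum_{n>N_\Omega}\lambda_n^{\Omega}$, one has $A-B=N_\Omega-|\Omega|\in[0,1)$, and since the $\lambda_n^{\Omega}$ are non-increasing the two kinds of bad indices cannot coexist: if $\lambda_{N_\Omega}^{\Omega}\le 1/2$ then every $n>N_\Omega$ has $\lambda_n^{\Omega}\le 1/2$, so $\mathbb{V}\ge B/2\ge (A-1)/2$; if $\lambda_{N_\Omega}^{\Omega}>1/2$ then every $n\le N_\Omega$ has $\lambda_n^{\Omega}>1/2$, so $\mathbb{V}\ge A/2$. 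Either way $A+B\le 4\mathbb{V}+2$, which is the needed bound up to an additive constant (harmless in the regime $\mathbb{V}\gtrsim 1$, and also left implicit in the paper through its citation of \cite{APR}). So the skeleton of your proof is right, but the justification of the key reverse inequality must be replaced by an argument of this type rather than an appeal to \cite{SD}.
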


\begin{proof}
We start with inequality (\ref{ineq}) and then proceed as in the proof of
Theorem 1.6 in \cite{APR}:%
\begin{eqnarray*}
\mathbb{V}\left( \mathcal{X(}\Omega \mathcal{)}\right) &\leq
&\sum_{n=1}^{N_{\Omega }}(1-\lambda _{n}^{\Omega })+\sum_{n>N_{\Omega
}}\lambda _{n}^{\Omega } \\
&=&\int_{\mathbb{R}^{2d}-\Omega }\rho _{1}^{N_{\Omega }}(z)\,dz+\left( 
\mathrm{trace}(T_{\Omega })-\int_{\Omega }\rho _{1}^{N_{\Omega
}}(z)\,dz\right) \\
&=&\int_{\mathbb{R}^{2d}-\Omega }\left\vert \rho _{1}^{N_{\Omega
}}(z)-1_{\Omega }(z)\right\vert \,dz+\int_{\Omega }\left\vert \rho
_{1}^{N_{\Omega }}(z)-1_{\Omega }(z)\right\vert \,dz \\
&=&\int_{\mathbb{R}^{2d}}\left\vert \rho _{1}^{N_{\Omega }}(z)-1_{\Omega
}(z)\right\vert \,dz\text{.}
\end{eqnarray*}%
The result follows from (\ref{ineEE}). The lower bound of the variance
follows from a related argument, which is contained in the Steps 2 and 3 of
the proof of Theorem 1.5 in \cite{APR}.
\end{proof}

\begin{remark}
To obtain the previous theorem, we have proved that%
\begin{equation*}
\int_{\mathbb{R}^{2d}}\left\vert \rho _{1}^{N_{\Omega }}(z)-1_{\Omega
}(z)\right\vert \,dz\lesssim \mathbb{V}\left( \mathcal{X(}\Omega \mathcal{)}%
\right) \lesssim \int_{\mathbb{R}^{2d}}\left\vert \rho _{1}^{N_{\Omega
}}(z)-1_{\Omega }(z)\right\vert \,dz\text{.}
\end{equation*}%
This holds for a DPP with no restrictions (details can be provided for a
general case, but this would be out of scope of this note). Thus, all
conditions for hyperuniformity of DPPs can be written using, instead of the
variance $\mathbb{V}\left( \mathcal{X(}\Omega \mathcal{)}\right) $ of $%
\mathcal{X}$, the $L^{1}$ rate of convergence of the associated finite DPP $%
\mathcal{X}^{N_{\Omega }}$, $\int_{\mathbb{R}^{2d}}\left\vert \rho
_{1}^{N_{\Omega }}(z)-1_{\Omega }(z)\right\vert \,dz$.
\end{remark}

\textbf{Acknowledgement.} \emph{I would like to thank the three reviewers
for comments that helped putting this work in the proper background context,
and for detecting an innacuracy in the previous formulation and proof of
Proposition 2.2.}

\end{document}